\newtheorem{lemma}{Lemma}
\theoremstyle{definition}
\newtheorem{definition}{Definition}
\newtheorem{observation}{Observation}
\let\plainqed\qedsymbol
\Crefname{observation}{Observation}{Observations}
\title{Hamiltonicity Parameterized by Mim-Width is (Indeed) Para-NP-Hard\thanks{A preliminary version of this work appears in the proceedings of IPEC 2025.}}
\author[1]{Benjamin Bergougnoux}
\author[2,\Letter]{Lars Jaffke}
\affil[1]{Aix-Marseille Université, LIS, CNRS, France}
\affil[2]{NHH Norwegian School of Economics, Bergen, Norway}
\affil[\Letter]{Corresponding author}
\date{\today}
\begin{document}

\maketitle

\newcommand\mimbound{26}

\begin{abstract}
    We prove that \textsc{Hamiltonian Path} and \textsc{Hamiltonian Cycle} 
    are \NP-hard on graphs of linear mim-width $\mimbound$,
    even when a linear order of the input graph with mim-width $\mimbound$ 
    is provided together with input.
    This fills a gap left by a broken proof of the para-NP-hardness of 
    Hamiltonicity problems parameterized by mim-width.
\end{abstract}

\begin{center}
    \begin{minipage}{.875\textwidth}
        \small
        \noindent%
        \textbf{Keywords.} Hamiltonian Path, Hamiltonian Cycle, Mim-Width, Para-NP-Hardness
    \end{minipage}
\end{center}

\section{Introduction}
%
%
Maximum induced matching width (mim-width)
which was introduced in 2012 by Vatshelle~\cite{VatshelleThesis}
is by now a common staple among width parameters of graphs (e.g.~\cite{BakkaneJ22,BelmonteV13,BBD25,BergougnouxDJ23,BHJL25,BergougnouxK21,BergougnouxPT22,BodlaenderGJJL25,BonomoBrabermanBMP24,BrettellHMPP22,BrettellHMP22,BrettellMPY25,Bui-XuanTV13,EibenGHJK22,GalbyMR20,GolovachHKKSV18,GonzalezM24,HogemoTV19,JaffkeKST19,JKT20,JaffkeKT20,JaffkeKT24,Mengel18,MunaroY23,OtachiST24}).
One pillar in the understanding of its algorithmic use and limitations 
is that finding simply structured \emph{induced} subgraphs 
tends to be tractable (solvable in \XP time by the width of a given decomposition),
while finding \emph{non-induced} subgraphs is usually hard (\NP-hard for constant mim-width).
Most notably, finding long induced paths and cycles is in \XP~\cite{BergougnouxK21},
while in \cite{JKT20}, it has been claimed that 
\textsc{Hamiltonian Cycle} is \NP-hard on graphs of linear mim-width~$1$.
The latter claim rests on a proof that \textsc{Hamiltonian Cycle} is \NP-hard on 
rooted directed path graphs given in \cite{PP08},
and in \cite{JKT20} it is shown that the graphs resulting from this reduction 
have linear mim-width $1$.
Unfortunately, as we show below, the reduction from \cite{PP08} 
has a flaw, 
which means that the complexity of Hamiltonicity problems 
(or, more generally, problems of finding long non-induced paths or cycles)
parameterized by mim-width remained open.
In this paper, we repair this gap in the literature 
by proving the following result.
\begin{restatable}{theorem}{thmMain}
    The \textsc{Hamiltonian Path} and \textsc{Hamiltonian Cycle} problems are \NP-hard on graphs of linear mim-width \mimbound,
    even when a linear order of the input graph of mim-width $\mimbound$ is provided as part of the input.
\end{restatable}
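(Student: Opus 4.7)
The plan is to give a polynomial-time reduction from a canonical NP-hard problem to a pair $(G, \sigma)$, where $G$ is the constructed graph and $\sigma$ is a linear order of $V(G)$, such that (i)~$G$ admits a Hamiltonian path/cycle iff the source instance is a yes-instance, and (ii)~the mim-width of $\sigma$ is at most \mimbound. A natural starting point is \textsc{3-SAT} or a Hamiltonicity variant on a restricted graph class (e.g.\ cubic planar), because the clause--variable incidence can be arranged along a one-dimensional ``spine'' that mirrors the target linear order. First I would design a variable gadget admitting exactly two Hamiltonian traversals, encoding ``true'' and ``false'', together with a clause gadget that can be integrated into the global Hamiltonian traversal if and only if at least one of its incident literals is set to true. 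The gadgets are then connected along the spine so that every Hamiltonian path/cycle of $G$ decomposes into one traversal per variable gadget plus one consistent traversal per clause gadget, and correctness of the reduction reduces to a routine gadget-by-gadget check.

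\textbf{The linear order and mim-width budget.} Second, I would define $\sigma$ so that the vertices of each gadget appear as a contiguous block (or in a small bounded number of interleaved blocks), with variable gadgets appearing in order and each clause's vertices inserted next to the block of its last occurring variable. The decisive design constraint is that, at every prefix of~$\sigma$, only a constant number of gadgets are \emph{active} (have vertices on both sides of the cut); moreover, the edges of each active gadget that cross the cut should be laid out so that their endpoints on each side have essentially nested neighbourhoods in the cross-bipartite graph. Under these conditions the contribution of each active gadget to any induced matching across the cut is an absolute constant, and summing over the bounded number of active gadgets yields the global bound of~\mimbound.

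\textbf{Main obstacle.} The main difficulty lies in establishing the mim-width bound itself rather than correctness. Mim-width is an intrinsically global parameter: two cut-edges arbitrarily far apart in $\sigma$ can still jointly contribute to an induced matching in the bipartite cut graph, simply because no ``crossing'' edge happens to exist between their endpoints. This rules out purely local reasoning. I would therefore devote the bulk of the technical work to a case analysis over cuts of $\sigma$: classify each cut by the multiset of gadget-types that are currently active, and, for each such configuration, explicitly bound the maximum induced matching in the bipartite graph across the cut restricted to each gadget's boundary. The gadgets must be engineered so that this bound is a small absolute constant per gadget, and so that ``induced'' interactions between distinct active gadgets cannot add more than an additional constant. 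Once this per-gadget bound is established, the arithmetic $(\text{number of active gadgets})\times(\text{per-gadget matching bound})\le\mimbound$ will close the argument. Finally, since both $G$ and $\sigma$ are produced explicitly in polynomial time, the theorem follows.
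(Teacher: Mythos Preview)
Your high-level plan---reduce from \textsc{3-SAT}, use variable gadgets with two canonical traversals, attach clause gadgets, and then exhibit a linear order of bounded mim-width---matches the paper's approach. The correctness part (Lemmas on ``satisfiable $\Leftrightarrow$ Hamiltonian'') is indeed routine once the gadgets are fixed, and the paper uses exactly the kind of clause gadget you describe (from Cygan--Kratsch--Nederlof).

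The genuine gap is in your mim-width argument. You propose to bound the induced matching across any cut by $(\text{number of active gadgets}) \times (\text{per-gadget bound})$, where a gadget is \emph{active} if it has vertices on both sides. But an induced matching in $G[A,\bar A]$ can consist entirely of edges between gadgets that are \emph{not} active: if a variable gadget lies wholly in $A$ and a clause gadget wholly in $\bar A$, the connecting edge still crosses the cut. With $n$ variables and $m$ clauses there can be $\Theta(n+m)$ such cross-gadget edges at a single cut, and for a generic \textsc{3-SAT} instance nothing forces their endpoints to have nested neighbourhoods. So the per-active-gadget accounting simply does not bound $\mimval_G(A)$, and ``inserting each clause next to its last variable'' does not help: the cut just after variable $k$ is crossed by every literal-edge from a variable $\le k$ to a clause whose last variable is $> k$.

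The paper resolves this with an idea your proposal does not contain. After building the gadgets it observes that ``at this point, $G$ may have arbitrarily high mim-width'' and then deliberately adds many \emph{dummy edges}---edges that no Hamiltonian path can ever use---whose sole purpose is to collapse the cross-cut bipartite graphs into chain graphs (mim-value $\le 1$). Concretely, each variable gadget is split into $m$ pieces $D_i^1,\ldots,D_i^m$, the linear order groups all $D_\bullet^j$ together with $\clausegadget_j$, and dummy edges are inserted between the $D_\bullet^j$ and $D_\bullet^{j+1}$ layers (and between the $t_i$ and $p_j$) so that the long-range edges across any cut form a chain graph plus $O(1)$ extra edges. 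Proving that the dummy edges are unusable by any Hamiltonian path is the technical heart of the paper (the ``$(a,b)$-respectable'' lemma). Without this mechanism---or some equivalent way of forcing the cross-cut bipartite graph to be chain-like while preserving correctness---your construction will not have bounded mim-width.
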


Observe that hardness holds even when 
a linear order of small mim-width is provided at the input.
Recently, Bergougnoux, Bonnet, and Duron~\cite{BBD25} showed that 
computing the mim-width of a graph 
exactly is para-\NP-hard as well.
Note however that there might still be some $f$ 
such that there is an \XP time $f(k)$-approximation algorithm to mim-width~$k$.

We would like to mention that 
the constant on the mim-width in our statements can likely be improved,
and it is an interesting question where exactly the boundary 
between tractability and intractability lies for Hamiltonicity problems.
Our results very much allow the possibility 
that Hamiltonicity is polynomial-time solvable 
on rooted directed path graphs 
or more generally, on graphs of mim-width $1$.
For \textsc{Clique} and a number of related problems, 
Otachi, Suzuki, and Tamura~\cite{OtachiST24} recently showed 
that the boundary between tractable and intractable cases 
lies between mim-width $1$ and $2$.
While \textsc{Clique} was known to be solvable in polynomial time 
on graphs of mim-width $1$ (given a decomposition),
they showed that it becomes \NP-hard on graphs of mim-width~$2$.
It would be interesting to establish a similar dividing line for Hamiltonicity problems.

\section{Preliminaries}
We denote the set of natural numbers by $\bN = \{0, 1, \ldots\}$,
and for $n \in \bN$, we use the shorthand $[n] = \{1, \ldots, n\}$.
All graphs considered in this paper are finite, undirected, and simple.
For a graph $G$, 
we denote by $V(G)$ its vertices and by $E(G)$ its edges.
For a vertex $v \in V(G)$, we denote its neighborhood by 
$N_G(v) = \{u \mid uv \in E(G)\}$.
The \emph{degree} of $v$ is $\card{N_G(v)}$.
For a set $X \subseteq V(G)$, 
the \emph{subgraph of $G$ induced by $X$}, denoted by $G[X]$,
is the graph on vertex set $X$ and edge set 
$\{uv \mid uv \in E(G), \{u, v\} \subseteq X\}$.
A set $X \subseteq V(G)$ is \emph{independent} if $G[X]$ has no edges 
and $X$ is a \emph{clique} if every pair of vertices in $X$ is adjacent in $G$.

A \emph{walk} in a graph $G$ is a sequence of vertices 
$W = (v_1, \ldots, v_r)$ such that for all $i \in [r-1]$, $v_i v_{i+1} \in E(G)$.
If all vertices in $W$ are pairwise distinct, then $W$ is called a \emph{path},
and if all vertices in $\{v_1, \ldots, v_{r-1}\}$ are pairwise distinct and $v_1 = v_r$,
then $W$ is a \emph{cycle}.
A path/cycle is \emph{Hamiltonian} if it contains all vertices of $G$.

A graph $G$ is \emph{bipartite} if there is a partition $(A, B)$ of $V(G)$
such that $A$ and $B$ are independent.
$G$ is further called a \emph{chain graph}
if there is a linear order $a_1, \ldots, a_n$ of $A$
such that for all $i \in [n-1]$: $N(a_i) \subseteq N(a_{i+1})$.

Let $G$ be a graph. Two distinct vertices $u, v \in V(G)$ are \emph{false twins} if $N_G(u) = N_G(v)$ and $uv \notin E(G)$.
Let $uv \in E(G)$. 
The \emph{subdivision of $uv$} is the operation of replacing $uv$ 
with a path $(u, x, v)$ where $x$ is a newly created vertex.

\myparagraph{Linear Mim-width.}
Let $G$ be a graph and $A \subseteq V(G)$.
We let $\bar{A} = V(G) \setminus A$.
We denote by $G[A, \bar{A}]$ 
the bipartite subgraph of $G$ with vertex bipartition $(A, \bar{A})$
whose edges are $\{ab \mid ab \in E(G), a \in A, b \in \bar{A}\}$.
A set of edges $M = \{a_i b_i \mid i \in [r]\} \subseteq E(G)$ 
is a \emph{matching} if all vertices in 
$V(M) = \bigcup_{i \in [r]} \{a_i, b_i\}$ 
are pairwise distinct.
A matching $M$ is \emph{induced} if $E(G[V(M)]) = M$.
The \emph{mim-value} of a set $A \subseteq V(G)$, denoted by $\mimval_G(A)$, 
is the largest size of any induced matching in $G[A, \bar{A}]$.
Let $\linord = (v_1, \ldots, v_n)$ be a linear order of $V(G)$.
The \emph{maximum induced matching width (mim-width) of $\linord$}
is $\max_{i \in [n]} \mimval_G(\{v_1, \ldots, v_i\})$.

\medskip
The following observation will be helpful in a later proof.
\begin{observation}\label{obs:chain}
    Let $H$ be a chain graph with bipartition $(A, B)$.
    Then, $\mimval_H(A) \le 1$.
\end{observation}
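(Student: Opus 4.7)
The plan is a short proof by contradiction. Suppose, towards a contradiction, that $\mimval_H(A) \ge 2$, so there is an induced matching $M \subseteq E(H[A, B])$ of size at least two. Pick two edges $a_i b_i, a_j b_j \in M$ where $a_i, a_j \in A$, $b_i, b_j \in B$, and the indices refer to the chain linear order $a_1, \ldots, a_n$ of $A$ guaranteed by the chain graph definition, with $i < j$.

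From the chain property, iterating $N(a_k) \subseteq N(a_{k+1})$ from $k = i$ up to $k = j-1$ yields $N(a_i) \subseteq N(a_j)$. Since $a_i b_i \in E(H)$, we have $b_i \in N(a_i) \subseteq N(a_j)$, so $a_j b_i$ is an edge of $H$. But $a_j b_i \notin M$ (as $M$ is a matching and $a_j, b_i$ already appear in other edges of $M$), so $E(H[\{a_i, a_j, b_i, b_j\}]) \supsetneq \{a_i b_i, a_j b_j\}$, contradicting the assumption that $M$ is induced.

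There is essentially no obstacle here: the only subtlety is taking the indices of $a_i$ and $a_j$ with respect to the chain order rather than an arbitrary order, and then noting that the containment $N(a_i) \subseteq N(a_j)$ directly produces the forbidden extra edge $a_j b_i$. The whole argument fits in a few lines and invokes nothing beyond the definitions of chain graph, induced matching, and $\mimval_H$ (with $\bar A = B$ since $H$ is bipartite with parts $A$ and $B$).
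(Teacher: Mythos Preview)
Your proof is correct. The paper states this observation without proof, so there is no argument to compare against; your short contradiction via the chain inclusion $N(a_i)\subseteq N(a_j)$ producing the extra edge $a_jb_i$ is exactly the intended (and standard) justification.
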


\subsection{Counterexample to the existing reduction}\label{sec:counterex}
We recall the reduction from~\cite{PP08}.
It is from \textsc{Hamiltonian Cycle} on bipartite graphs of maximum degree $3$ and minimum degree $2$
to \textsc{Hamiltonian Cycle} on rooted directed path graphs,
which are intersection graphs of directed paths in a rooted tree.

Let $G$ be a bipartite graph of maximum degree $3$ and minimum degree $2$, 
and let $(A, B)$ be the 
vertex bipartition of $G$.
We may assume that $\card{A} = \card{B} = n$,
and fix arbitrary orderings on $A$ and $B$,
that is, we let 
$A = \{a_1, \ldots, a_n\}$ and 
$B = \{b_1, \ldots, b_n\}$ throughout.
We construct a graph $H$ as follows.
\begin{enumerate}
    \item Let $H$ be a copy of $G$.
    \item Subdivide each edge $a_ib_j \in E(H)$ once. For convenience, we use $a_i b_j$ to both denote the edge in $G$ and the vertex in $H$ created in the subdivision of this edge.
    \item Make $\{a_ib_j \mid i, j \in [n], a_ib_j \in E(G)\}$ a clique in $H$.
    \item For each $i \in [n]$, make $a_i$ adjacent to all vertices $a_{i'}b_j \in V(H)$ where $1 \le i' \le i$, $j \in [n]$, $a_{i'}b_j \in E(G)$.
    \item For each $i \in [n]$ such that $b_i$ has degree $3$ in $G$, 
        add the vertex $b_i'$ with $N_H(b_i') = N_H(b_i)$ to $H$.
\end{enumerate}
In \cite{JKT20} it is shown that the graph $H$ as constructed above has linear mim-width~$1$.

In \Cref{fig:counterexample},
we show a bipartite graph $G$ of maximum degree $3$ and minimum degree $2$ that has no Hamiltonian cycle, 
while the graph obtained from running the above reduction on input $G$ has one, 
thus providing a counterexample to the proof in~\cite{PP08}.
\begin{figure}[htb]
    \centering
    \includegraphics[width=\linewidth]{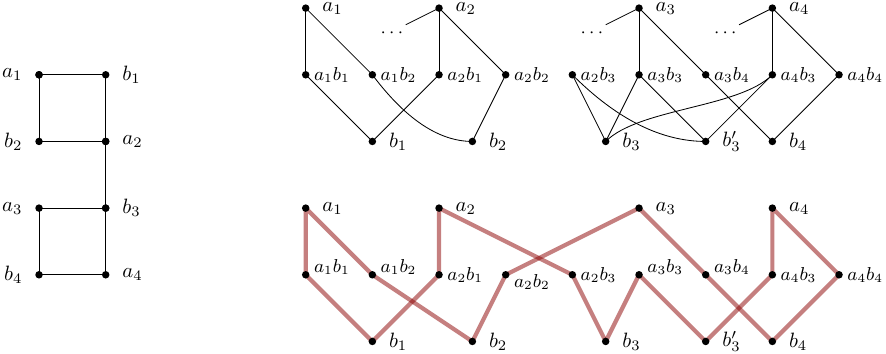}
    \caption{Counterexample to the construction from \cite{PP08}. 
    The three dots symbolize adjacency to all vertices to the left on the middle layer.}
    \label{fig:counterexample}
\end{figure}

\subsection{\ThreeSat}
A \emph{$3$-CNF formula} is a Boolean formula in conjunctive normal form such that each clause has at most three pairwise distinct literals.
%
Given a formula $\cnf$, we denote by $\var(\cnf)$ its set of variables. A truth assignment to the variables of $\cnf$ is a a function $\alpha \colon \var(\cnf)\to \{0,1\}$. We say that a clause $C$ is satisfied by $\alpha$ if $C$ contains a literal $\ell$ 
with variable $x$ such that $\alpha(x) = 1$ if $\ell = x$ and $\alpha(x) = 0$ if $\ell = \bar{x}$.
Moreover, $\alpha$ is said to be a satisfying assignment of $\cnf$ if it satisfies all the clauses of $\cnf$.
We will reduce from the following problem.
\problemStatement{\ThreeSat}{
    Input={A $3$-CNF formula $\phi$.},
    Question={Does $\cnf$ admit a satisfying assignment?}
}

\section{Main result}
In this section, we prove that the following problem is \NP-hard for 
parameter value $w = 25$.
\problemStatement{Hamiltonian Path/Linear Mim-Width}{
    Input={A graph $G$ and a linear order $\linord$ of $V(G)$.},
    Parameter={The mim-width $w$ of $\linord$.},
    Question={Does $G$ have a Hamiltonian path?}
}

We give a polynomial-time reduction from the \ThreeSat problem.
Let $\cnf$ be a $3$-CNF formula with 
variables $\var(\cnf) = \{x_1, \ldots, x_n\}$ and 
clauses $C_1, \ldots, C_m$.

\myparagraph{Variable gadgets.}
For each $i \in [n]$, we create a variable gadget as follows.
It consists of a sequence of $m$ cycles on six vertices such that 
traversing each cycle clockwise 
corresponds to setting $x_i$ to true, 
and traversing them counterclockwise corresponds to setting $x_i$ to false.
The construction ensures that for each variable, 
all cycles are traversed in the same way.

Let $i \in [n]$.
We show how to construct the variable gadget $\vargadget_i$.
For each $j \in [m]$, it contains a cycle $D_i^j$
whose vertices (listed in order) are called
$\varvertex{0}{in}{i}{j}$,
$\varvertex{0}{out}{i}{j}$,
$\varvertexdt{out}{i}{j}$,
$\varvertex{1}{out}{i}{j}$,
$\varvertex{1}{in}{i}{j}$,
$\varvertexdt{in}{i}{j}$.
For each $j \in [m-1]$,
we add an edge between
$\varvertex{1}{out}{i}{j}$ and $\varvertex{1}{in}{i}{j+1}$,
and between
$\varvertex{0}{out}{i}{j}$ and $\varvertex{0}{in}{i}{j+1}$.
See \Cref{fig:variable-gadget} for an illustration.
Moreover, there is a vertex $s_i$, 
adjacent to
$\varvertex{0}{in}{i}{1}$ and $\varvertex{1}{in}{i}{1}$,
and a vertex $t_i$, 
adjacent to 
$\varvertex{0}{out}{i}{m}$ and $\varvertex{1}{out}{i}{m}$.
\begin{figure}[hbt]
    \centering
    \includegraphics[width=\linewidth]{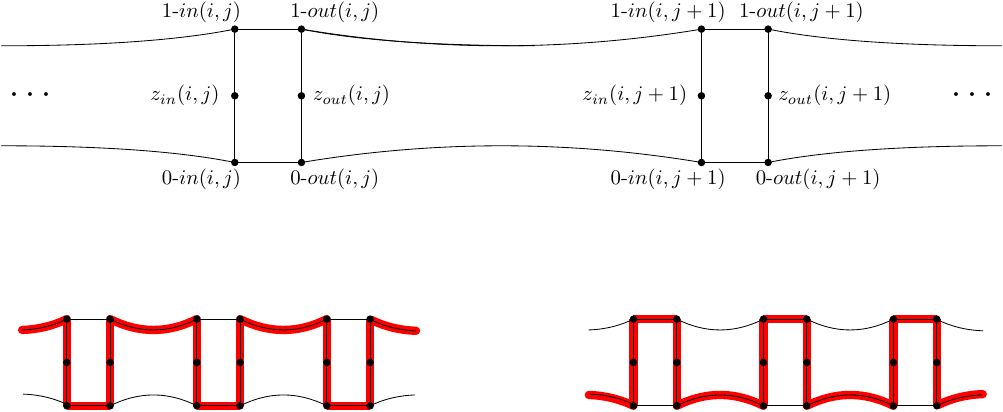}
    \caption{Illustration of the variable gadget. 
        Bottom left (resp., right): a traversal of the sequence corresponding to setting the variable to false (resp., true).}
    \label{fig:variable-gadget}
\end{figure}

\newcommand\gadget{\Gamma}
\myparagraph{Clause gadgets.}
For each clause $C_j$, we add a gadget $\clausegadget_j$ with the following property.
Say variable $x_i$ appears in $C_j$.
If a given truth assignment to $x_i$ satisfies $C_j$,
then we can enter and collect the vertices of $\clausegadget_j$
from $D_i^j$ (in the variable gadget $\vargadget_i$) 
under the traversal of $D_i^j$ corresponding to that truth assignment to $x_i$.
To avoid ``cheating'', we need $\clausegadget_j$ to have the following property:
once a path enters a vertex of $\clausegadget_j$,
then we have to immediately collect all vertices from $\clausegadget_j$,
and leave again via a prescribed exit, depending on where we entered $\clausegadget_j$.
The following clause gadget due to Cygan, Kratsch, and Nederlof~\cite{CKN18}
achieves just that.
We visualize these graphs in \Cref{fig:clause-gadget}.
\begin{figure}[htb]
    \centering
    \includegraphics[width=.7\linewidth]{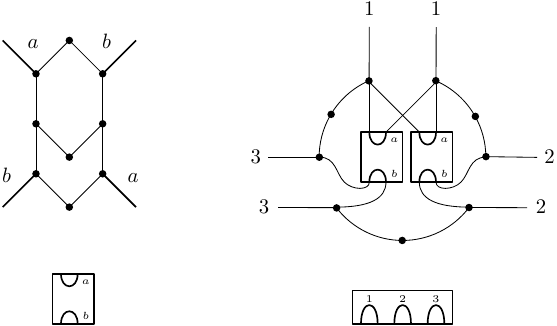}
    \caption{The clause gadgets due to Cygan, Kratsch, and Nederlof~\cite{CKN18}. On the left, a gadget such that each Hamiltonian path that enters it via an edge labeled $a$ (resp., $b$) must immediately collect all vertices in the gadget and leave via the other edge labeled $a$ (resp, $b$). 
    This gadget can be used for clauses of size two.
    On the right, a gadget for a clause of size three, with three entry-exit pairs and the analogous functionality.}
    \label{fig:clause-gadget}
\end{figure}
\newcommand\distin{\sigma}
\newcommand\distout{\tau}
\begin{lemma}[Cygan, Kratsch, and Nederlof~\cite{CKN18}]\label{lem:clause-gadget}
    For each $k \in \{2, 3\}$, there is a graph $\gadget_k$ on at most $27$ vertices
    with $k$ distinguished pairs of unique vertices 
    $\{(\distin_1, \distout_1), \ldots, (\distin_k, \distout_k)\}$
    with the following property.
    Let $G$ be a graph with an induced subgraph $\gadget_k$, 
    and let $P$ be a Hamiltonian path of $G$.
    Then, $P \cap \gadget_k$ is a $(\distin_i, \distout_i)$-path for some $i \in [k]$.
\end{lemma}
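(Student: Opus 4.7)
The plan is to exhibit the two gadgets $\Gamma_2$ and $\Gamma_3$ explicitly, as depicted in \Cref{fig:clause-gadget} (following~\cite{CKN18}), and then verify the claimed property by a structural argument. The vertex bound is checked by inspection: both gadgets are drawn and contain well under $27$ vertices each. The distinguished pairs $(\sigma_i,\tau_i)$ are the ``port'' pairs labeled $i$ in the figure, while all remaining vertices of $\Gamma_k$ are to be thought of as internal, i.e.\ they have no $G$-neighbors outside $\Gamma_k$ (this is the standard convention for how such a gadget is plugged into a host graph, and it is what makes the statement meaningful).

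The key structural observation is that, under this convention, a Hamiltonian path $P$ of $G$ can enter or leave $\Gamma_k$ only at distinguished vertices. Hence $P \cap \Gamma_k$ is a vertex-disjoint union of subpaths of $P$ that together cover $V(\Gamma_k)$ and whose endpoints lie in $\{\sigma_i,\tau_i : i \in [k]\} \cup \{\text{endpoints of }P\}$. A short side argument rules out the case that an endpoint of $P$ lies strictly inside $\Gamma_k$: such a ``dangling'' subpath would leave at least one $\sigma_i$ or $\tau_i$ with unmatched parity and force a gap elsewhere in the cover. After this reduction, every subpath in $P \cap \Gamma_k$ has both endpoints among $\{\sigma_i,\tau_i\}$.

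The remaining task is a finite case analysis on the small graphs $\Gamma_2$ and $\Gamma_3$. I would (i) rule out decompositions into more than one subpath by a parity/cut argument around a small vertex separator inside $\Gamma_k$ that visibly forces the cover to be connected; and (ii) show that every single-path cover is a $(\sigma_i,\tau_i)$-path for some $i$, by walking through every proposed $(\sigma_i,\tau_j)$-traversal with $i \neq j$ and pointing to an internal branch that is necessarily missed. Both steps are concrete checks against the edge sets visible in \Cref{fig:clause-gadget}.

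The main obstacle is really just combinatorial bookkeeping: the $k=3$ gadget has more vertices and fewer obvious symmetries, so the case enumeration is longest there. Because the precise gadgets and their verification are already recorded in~\cite{CKN18}, the cleanest presentation would simply cite their analysis after re-drawing the construction; my proof plan is to reproduce the structural skeleton above and delegate the enumeration to that reference or to an inspection of \Cref{fig:clause-gadget}.
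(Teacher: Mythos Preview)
The paper does not prove this lemma at all: it is stated with attribution to~\cite{CKN18} and used as a black box, with the gadgets merely redrawn in \Cref{fig:clause-gadget}. Your proposal, which ultimately also defers the case analysis to~\cite{CKN18} or to inspection of the figure, therefore lands in the same place as the paper, just with more scaffolding around the citation.

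That said, a few points in your sketch are shakier than you present them. The assumption that non-distinguished vertices have no neighbours outside $\Gamma_k$ is indeed necessary for the statement to be true, but it is not encoded by ``$\Gamma_k \subseteq_i G$'' alone; you are right that this is the intended convention, but it should be stated as a hypothesis rather than smuggled in. Your ``short side argument'' ruling out an endpoint of $P$ inside $\Gamma_k$ via ``unmatched parity'' is not an argument as written---there is no obvious parity invariant doing this work, and in fact the lemma as literally stated would fail if an endpoint of $P$ were allowed to sit on an internal vertex of $\Gamma_k$. In the paper's actual construction this never happens because $s$ and $t$ are explicitly outside every clause gadget, which is how the issue is really handled. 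If you wanted a self-contained proof you would need to either add that hypothesis or carry out the finite check including dangling-endpoint cases; simply citing~\cite{CKN18}, as the paper does, is the cleaner option.
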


\myparagraph{Main construction.}
We now construct the graph $G$ from the formula $\cnf$,
see \Cref{fig:construction} for an illustration.
\begin{figure}[bth]
    \centering
    \includegraphics[width=.95\linewidth, page=1]{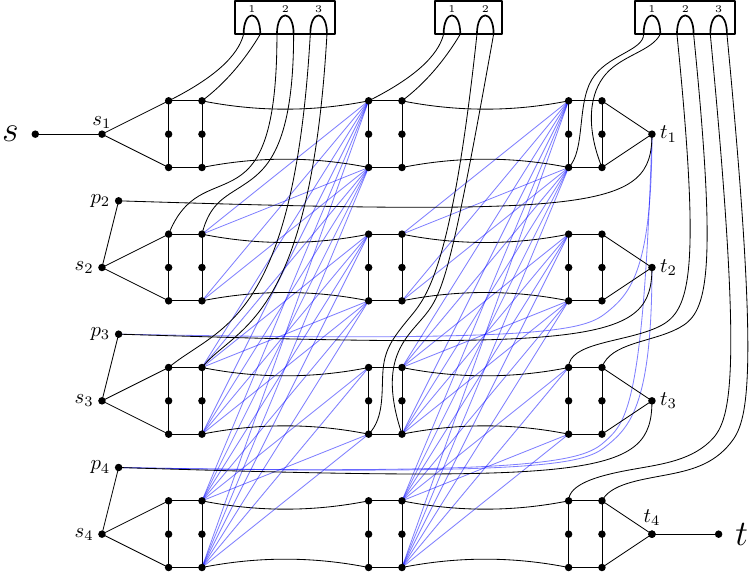}
    \caption{Overview of the construction. This is the example for the formula $(x_1 \lor x_2 \lor x_3) \land (x_1 \lor \overline{x_3}) \land (\overline{x_1} \lor x_3 \lor x_4)$.  The dummy edges are depicted in blue.}
    \label{fig:construction}
\end{figure}
First, we add two vertices $s$ and $t$ to $G$,
which will have degree one and are therefore the endpoints of 
any Hamiltonian path in $G$, if one exists.
For each $i \in [n]$, 
we add a variable gadget $\vargadget_i$.
We furthermore add the edges $s s_1$, $t_n t$, 
and for all $i \in [n-1]$, we proceed as follows.
We add a vertex $p_{i+1}$ to $G$, 
and add the edges $t_i p_{i+1}$ and $p_{i+1} s_{i+1}$.

Next, for each clause $C_j$ with literals $\ell_1, \ldots, \ell_k$,
we proceed as follows.
We let the clause gadget $\clausegadget_j$ be a copy of 
$\gadget_k$ from \Cref{lem:clause-gadget};
let $(\distin_1, \distout_1)$, $\ldots$, $(\distin_k, \distout_k)$ be its distinguished vertices.
For each $h \in [k]$, let $x_{j_h}$ be the variable of literal $\ell_h$.
If $x_{j_h}$ appears positively in $C_j$, then
we add the edges $\{\distin_h, \varvertex{0}{in}{i}{j}\}$ and $\{\distout_h, \varvertex{0}{out}{i}{j}\}$,
and if $x_{j_h}$ appears negated in $C_j$, then we add
$\{\distin_h, \varvertex{1}{in}{i}{j}\}$ and $\{\distout_h, \varvertex{1}{out}{i}{j}\}$.
This finishes the main part of the construction.

\myparagraph{Dummy edges.} 
At this point, $G$ may have arbitrarily high mim-width.
To reduce the mim-width of the graph, we add lots of dummy edges.
These edges are introduced in such a way that they decrease the mim-width
without allowing to ``cheat''.
That is, no Hamiltonian path of $G$ will ever use any of these edges.
We proceed as follows.
For each $j \in [m-1]$,
each $i \in [2..n]$, 
and each $h < i$,
we add the edges
$\{\varvertex{b_1}{out}{i}{j}, \varvertex{b_2}{in}{h}{j+1}\}$,
for any $b_1, b_2 \in \{0, 1\}$.
Moreover, for each $i,j\in [t]$ with $i<j$, we add the edge $\{t_i, p_j\}$.

\medskip
This finishes the construction.
Clearly, this reduction can be performed in polynomial time.
We show that it is correct.
\begin{lemma}\label{lem:cor-ass-to-path}
    If $\cnf$ is satisfiable, then $G$ has a Hamiltonian path.
\end{lemma}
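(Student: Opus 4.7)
Fix a satisfying assignment $\alpha$ of $\cnf$. For each clause $C_j$ pick one literal $\ell_{h(j)} \in C_j$ that $\alpha$ satisfies, and let $\pi(j) \in [n]$ denote the index of its variable. The Hamiltonian path $P$ I will build starts at $s$, visits $s_1$, goes through the variable gadget of $x_1$ from $s_1$ to $t_1$, then steps $t_1 \to p_2 \to s_2$, continues through the variable gadget of $x_2$, and so on until exiting at $t_n \to t$. The inter-gadget steps use exactly the edges $s\,s_1$, $t_i\,p_{i+1}$, $p_{i+1}\,s_{i+1}$, and $t_n\,t$ added in the construction, covering $s$, $t$, and all $s_i$, $t_i$, and $p_i$.

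The traversal of the variable gadget of $x_i$ uses one side consistently across all its six-cycles $D_i^1, \ldots, D_i^m$, chosen according to $\alpha(x_i)$. The key structural observation is that each $D_i^j$ admits exactly two Hamiltonian subpaths whose endpoints lie on the same side of the cycle: one from the $1$-in vertex to the $1$-out vertex that uses the direct edge between $0$-in and $0$-out, and a symmetric one swapping the roles of $0$ and $1$. I will use the $1$-in-to-$1$-out subpath in every $D_i^j$ if $\alpha(x_i) = 1$, and the symmetric subpath in every $D_i^j$ if $\alpha(x_i) = 0$. The inter-cycle edges (which connect same-side ``out'' to same-side ``in'') and the attachments of $s_i$ and $t_i$ to both sides of $D_i^1$ and $D_i^m$ glue these $m$ subpaths into a single $s_i$-to-$t_i$ path covering all $6m$ vertices of the gadget.

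Whenever $\pi(j) = i$, I detour through the clause gadget of $C_j$ during the visit to $D_i^j$. The construction rule---positive literals attach to the $0$-in/$0$-out pair of $D_i^j$, negated literals to the $1$-in/$1$-out pair---combined with the fact that $\alpha$ satisfies $\ell_{h(j)}$ ensures that the distinguished vertices $\sigma_{h(j)}, \tau_{h(j)}$ of the clause gadget are adjacent respectively to the two consecutive cycle vertices that my chosen subpath joins by a direct edge. Replacing that edge by a Hamiltonian $(\sigma_{h(j)}, \tau_{h(j)})$-path inside the clause gadget, whose existence is guaranteed by \Cref{lem:clause-gadget}, extends $P$ so as to cover every vertex of that clause gadget exactly once. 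Doing this for all $j \in [m]$ produces the claimed Hamiltonian path: every vertex of $G$ is visited exactly once, and no dummy edge is used. The main (minor) obstacle is confirming the polarity-vs-side match in each of the four sub-cases (positive or negated literal, $\alpha(x_i) \in \{0,1\}$); this is mechanical once $\ell_{h(j)}$ has been chosen to be a literal of $C_j$ satisfied by $\alpha$.
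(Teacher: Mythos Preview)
Your proposal is correct and follows essentially the same approach as the paper's proof: traverse the variable gadgets $\vargadget_1,\ldots,\vargadget_n$ in order, choosing the traversal direction of each according to $\alpha$, and detour into each clause gadget from the unique cycle whose polarity matches a satisfied literal. Your version is in fact slightly more careful than the paper's in that you explicitly fix a single satisfying literal $\ell_{h(j)}$ per clause (so each $\clausegadget_j$ is entered exactly once), whereas the paper leaves this implicit; one minor quibble is that \Cref{lem:clause-gadget} as stated only asserts the restriction property and does not literally guarantee the existence of a Hamiltonian $(\distin_h,\distout_h)$-path inside $\gadget_k$, but the paper's own proof relies on the same (true) fact about the Cygan--Kratsch--Nederlof gadgets.
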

\begin{proof}
    Suppose that $\cnf$ has a satisfying assignment $\ass \colon \var(\cnf) \to \{0, 1\}$.
    We construct a Hamiltonian path $P$ of $G$ as follows. See \Cref{fig:hamiltonian} for an illustration of this construction.
    First, $P$ starts in $s$, $s_1$.
    If $\ass(x_1) = 1$, then the next vertex of $P$ is 
    $\varvertex{1}{in}{1}{1}$, 
    and if $\ass(x_1) = 0$, then the next vertex of $P$ is
    $\varvertex{0}{in}{1}{1}$.
    Suppose the former is the case, the latter case is symmetric.
    The next vertices on $P$ are 
    $\varvertexdt{in}{1}{1}$,
    $\varvertex{0}{in}{1}{1}$.
    Now, if $x_1$ appears positively in $C_1$,
    then by construction, we can enter the gadget $\clausegadget_1$
    from $\varvertex{0}{in}{1}{1}$,
    collect all its vertices, and leave it to arrive at
    $\varvertex{0}{out}{1}{1}$.
    Note that in this case, $x_1$'s truth assignment satisfied $C_1$.
    $P$ then collects 
    $\varvertexdt{out}{1}{1}$,
    $\varvertex{1}{out}{1}{1}$,
    and takes the edge to
    $\varvertex{1}{in}{1}{2}$.
    We proceed analogously in sequence on all $D_1^j$,
    until we reach $t_1$.
    Here we follow the edge to $p_2$, $s_2$,
    and proceed on $\vargadget_2$ as we did on $\vargadget_1$,
    now based on which value $\ass$ assigns $x_2$.
    We repeat this process until we finally reach $t$.

    Whenever the truth assignment $\ass(x_i)$ satisfies a clause $C_j$,
    we are able to collect the vertices of $\clausegadget_j$.
    Therefore, the procedure described above indeed produces a Hamiltonian path of $G$.
\end{proof}

\begin{figure}
    \centering
    \includegraphics[width=\linewidth, page=2]{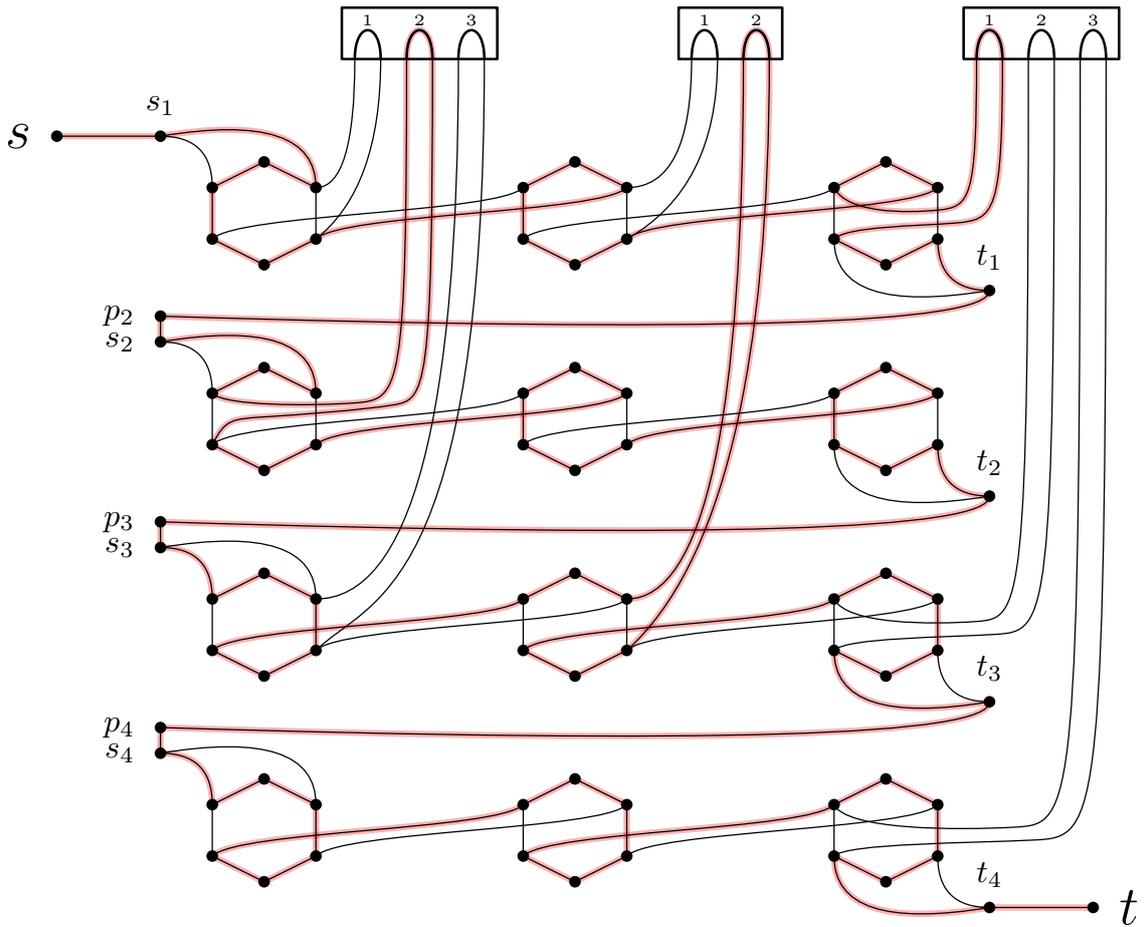}
    \caption{Example of the Hamiltonian path---of the graph presented in \Cref{fig:construction}---constructed in the proof of \Cref{lem:cor-ass-to-path} from the satisfying assignment $\alpha$ with $\alpha(x_1)=\alpha(x_3)= 0$ and $\alpha(x_2)=\alpha(x_4)=1$. We omit the dummy edges to improve the legibility. }
    \label{fig:hamiltonian}
\end{figure}

Let $s, t \in V(G)$, and let $P$ be an $(s, t)$-path in $G$.
For distinct $u, v \in V(P)$, 
we write $u \prec_P v$ if the path from $s$ to $v$ contains $u$.
For two sets $X, Y \subseteq V(G)$,
we write $X \prec_P Y$ if for all $x \in X$, $y \in Y$, $x \prec_P y$.
For intuition on the following notions, recall \Cref{fig:variable-gadget}.
\begin{definition}
    Let $a \in [n]$ and $b \in [m]$,
    and let $P$ be a Hamiltonian path of $G$ starting in~$s$.
    We say that $P$ \emph{traverses $D_a^1,\dots,D_a^b$ in true-order} 
    if $P$ contains the following as subpaths for every $j\in[b]$
    \begin{description} 
        \item[($P_{in}$)] $\varvertex{1}{in}{a}{j}$, $\varvertexdt{in}{a}{j}$, $\varvertex{0}{in}{a}{j}$ 
        \item[($P_{out}$)] $\varvertex{0}{out}{a}{j}$, $\varvertexdt{out}{a}{j}$, $\varvertex{1}{out}{a}{j}$
    \end{description}
    and for every $j\in [b-1]$, $P$ contains the edge $\{\varvertex{1}{out}{i}{j}, \varvertex{1}{in}{i}{j+1}\}$.
    We say that $P$ \emph{traverses $D_a^1,\dots,D_a^b$ in false-order} 
    if $P$ contains the following as subpaths: 
    \begin{description}
        \item[($P_{in}$)] $\varvertex{0}{in}{a}{j}$, $\varvertexdt{in}{a}{j}$, $\varvertex{1}{in}{a}{j}$
        \item[($P_{out}$)] $\varvertex{1}{out}{a}{j}$, $\varvertexdt{out}{a}{j}$, $\varvertex{0}{out}{a}{j}$
    \end{description}
    and for every $j\in [b-1]$, $P$ contains the edge $\{\varvertex{0}{out}{i}{j}, \varvertex{0}{in}{i}{j+1}\}$.
\end{definition}

\begin{definition}\label{def:respect}
    Let $a \in [n], b\in [m]$ and let $P$ be a Hamiltonian $(s, t)$-path of $G$.
    We say that $P$ is $(a,b)$-\emph{respectable} if the following are satisfied.
    \begin{enumerate}
        \item For all $i\in [n]$ with $i < a$, we have $V(D_i^{1}) \prec_P V(D_i^2) \prec_P \dots \prec_P V(D_i^{m})$.
        \item We have $V(D_a^{1}) \prec_P  \dots \prec_P V(D_a^b)$.
        \item We have $V(\vargadget_1) \prec_P \dots \prec_P V(\vargadget_{a-1}) \prec_P V(\vargadget_a)$.
        \item For every $i < a$, $P$ traverses $D_{i}^{1},\dots,D_i^m$ in either true order or false order.
        \item $P$ traverses $D_{a}^{1},\dots,D_a^b$ in either true order or false order.
    \end{enumerate}
\end{definition}

\def\lastv{\mathsf{last}}

\begin{lemma}\label{lem:respect}
    Let $P$ be a Hamiltonian $(s, t)$-path of $G$.
    Then, $P$ is $(n,m)$-respectable.
\end{lemma}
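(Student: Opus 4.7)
The plan is to prove by induction on $(a,b)\in[n]\times[m]$ in lexicographic order that $P$ is $(a,b)$-respectable; the lemma is then the case $(a,b)=(n,m)$. The argument uses two local forcing tools throughout: (a)~every ``dot''-vertex $\varvertexdt{in}{i}{j}$, $\varvertexdt{out}{i}{j}$ has degree exactly~$2$ in $G$ and is not an endpoint of $P$, so both of its incident edges lie on $P$, yielding the forced subpaths $\varvertex{0}{in}{i}{j},\varvertexdt{in}{i}{j},\varvertex{1}{in}{i}{j}$ and $\varvertex{0}{out}{i}{j},\varvertexdt{out}{i}{j},\varvertex{1}{out}{i}{j}$ (in one of the two orientations); and (b)~\Cref{lem:clause-gadget}, which states that each clause gadget is traversed on $P$ as a single $(\distin_h,\distout_h)$-subpath, hence is used at most once and with a determined exit once the entry is fixed.

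In the base case $(1,1)$, $s$ has degree $1$ in $G$, so $P$ starts $s,s_1$; the remaining neighbours of $s_1$ are $\varvertex{0}{in}{1}{1}$ and $\varvertex{1}{in}{1}{1}$, and WLOG the third vertex of $P$ is the latter (the symmetric case yields false-order). Since any other choice for the second $P$-edge of $\varvertex{1}{in}{1}{1}$ would leave $\varvertexdt{in}{1}{1}$ with only one $P$-incident edge, the forced in-triangle at $D_1^1$ continues $P$ with $\varvertexdt{in}{1}{1},\varvertex{0}{in}{1}{1}$. From $\varvertex{0}{in}{1}{1}$ the only unvisited options are $\varvertex{0}{out}{1}{1}$ and (if $x_1$ appears positively in $C_1$) an entry into $\clausegadget_1$, which by \Cref{lem:clause-gadget} exits at the vertex adjacent to $\varvertex{0}{out}{1}{1}$; hence $\varvertex{0}{out}{1}{1}$ is reached next and the forced out-triangle completes the true-order traversal of $D_1^1$.

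For the inductive step, assume $(a,b)$-respectability with WLOG true-order; then $\varvertex{1}{out}{a}{b}$ is the last $D_a^b$-vertex in $P$. The intra-gadget step $(a,b)\to(a,b+1)$ for $b<m$ proceeds in three moves: (i)~the next $P$-vertex is $\varvertex{1}{in}{a}{b+1}$---the cycle-neighbours of $\varvertex{1}{out}{a}{b}$ are visited, the dummy targets $\varvertex{b_2}{in}{h}{b+1}$ for $h<a$ lie in $\vargadget_h$ which is fully visited (condition~3), and entering $\clausegadget_b$ is forced by \Cref{lem:clause-gadget} to exit at the visited $\varvertex{1}{in}{a}{b}$; (ii)~the forced in-triangle of $D_a^{b+1}$ extends $P$ by $\varvertexdt{in}{a}{b+1},\varvertex{0}{in}{a}{b+1}$; (iii)~from $\varvertex{0}{in}{a}{b+1}$ the path reaches $\varvertex{0}{out}{a}{b+1}$ directly or via $\clausegadget_{b+1}$, after which the forced out-triangle finishes $D_a^{b+1}$ in true-order. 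The inter-gadget step $(a,m)\to(a+1,1)$ is analogous, using the transition $\varvertex{1}{out}{a}{m}, t_a, p_{a+1}, s_{a+1}$ and then continuing on $D_{a+1}^1$ as in the base case.

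The main obstacle is eliminating the dummy edges at step~(iii), namely $\varvertex{0}{in}{a}{b+1}\to\varvertex{b_1}{out}{i'}{b}$ with $i'>a$, and the dummy edges $t_a\to p_j$ with $j>a+1$ in the inter-gadget step. For the $t$--$p$ dummies I would use a counting argument on $P$ restricted to $\{t_i\}_{i\in[n]}\cup\{p_j\}_{j\ge 2}\cup\{t\}$: the edge $s_j p_j$ is forced onto $P$ for every $j\ge 2$ (otherwise $s_j$'s two path-edges would close a cycle through $\varvertexdt{in}{j}{1}$), so each $p_j$ has exactly one $t_i$-neighbour on $P$; each $t_i$ with $i<n$ then uses exactly one edge to $\{\varvertex{0}{out}{i}{m},\varvertex{1}{out}{i}{m}\}$ (using both closes a cycle through $\varvertexdt{out}{i}{m}$, while using none would give $t_i$ two $p_j$-neighbours on $P$, contradicting the earlier count of $n-1$ such $t$--$p$ edges total); the resulting bijection $i\mapsto j$ from $[n-1]$ to $\{2,\dots,n\}$ with $j>i$ then forces $j=i+1$. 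For the intra-gadget dummies, I would strengthen the induction hypothesis to also record that no vertex of $\vargadget_{i'}$ with $i'>a$ is visited in the relevant prefix; any such dummy edge would then force the forced out-triangle of $D_a^{b+1}$ to be reached later via its external ``other''-edges, whose admissible options (dummies into the fully-visited $\vargadget_h$ with $h<a$; the single-use $\clausegadget_{b+1}$ with its determined exit; inter-cycle edges to $D_a^{b+2}$ that recursively inherit the same obstruction) are exhausted, producing the required contradiction.
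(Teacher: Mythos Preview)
Your overall architecture matches the paper's: induction (equivalently, lex-maximal counterexample) on $(a,b)$, with the degree-$2$ ``dot'' vertices forcing the in/out triples and \Cref{lem:clause-gadget} confining clause-gadget excursions. The base case and steps~(i)--(ii) are essentially the paper's argument. Your global bijection argument for the $t_i$--$p_j$ dummies is different from the paper's local reasoning (which argues directly that $p_{a+1}$ must follow $t_a$ because all of $t_1,\dots,t_{a-1}$ lie in the prefix), but it is correct once stated carefully: each $t_i$ with $i<n$ has at most one $P$-edge into $\{\varvertex{0}{out}{i}{m},\varvertex{1}{out}{i}{m}\}$ (two would close a $4$-cycle through $\varvertexdt{out}{i}{m}$), hence at least one $P$-edge to some $p_j$; since there are exactly $n-1$ such $t$--$p$ edges in total, each $t_i$ has exactly one, and the order constraint $j>i$ forces the bijection to be $i\mapsto i+1$.

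The genuine gap is in step~(iii), the elimination of the intra-gadget dummies $\varvertex{0}{in}{a}{b+1}\to \varvertex{b_1}{out}{i'}{b}$ with $i'>a$. Your proposed strengthening of the induction hypothesis (``no vertex of $\vargadget_{i'}$ with $i'>a$ in the prefix'') is not what does the work here, and your claim that the inter-cycle edges to $D_a^{b+2}$ ``recursively inherit the same obstruction'' and are therefore ``exhausted'' is not correct: those inter-cycle edges are \emph{not} ruled out as $P$-edges. What actually happens---and what the paper proves---is that once the dummy is taken, the cycle edge $\varvertex{0}{in}{a}{b+1}\varvertex{0}{out}{a}{b+1}$ is absent from $P$, and by \Cref{lem:clause-gadget} the clause-gadget option at $\varvertex{0}{out}{a}{b+1}$ is also blocked (its partner $\distin$-vertex has lost its only external attachment). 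Together with the fact that the dummy targets of $\varvertex{0}{out}{a}{b+1}$ lie in the already-ordered $\vargadget_h$ with $h<a$, this forces the $P$-neighbours of $\varvertex{0}{out}{a}{b+1}$ to be $\varvertexdt{out}{a}{b+1}$ and $\varvertex{0}{in}{a}{b+2}$ (or $t_a$ if $b+1=m$); symmetrically for $\varvertex{1}{out}{a}{b+1}$. Combined with the forced in-triple at $D_a^{b+2}$ this immediately closes a $6$-cycle (or a $4$-cycle through $t_a$ when $b+1=m$) inside $P$, which is the contradiction. There is no recursion to $D_a^{b+2}$'s out-triple: the cycle appears already between the out-triple of $D_a^{b+1}$ and the in-triple of $D_a^{b+2}$. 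Your write-up needs to make this short-cycle contradiction explicit rather than gesturing at an exhaustion of options.
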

\begin{proof}
    Throughout the proof, 
    we use the following observation that is immediate from the fact that 
    the vertices $\varvertexdt{in}{i}{j}$ and $\varvertexdt{out}{i}{j}$
    have degree two in $G$.
    \begin{observation}\label{obs:dt-vertices}
        For all $i \in [n]$ and $j \in [m]$, 
        the vertices 
        $\varvertex{0}{in}{i}{j}$,
        $\varvertexdt{in}{i}{j}$,
        $\varvertex{1}{in}{i}{j}$
        (resp., the vertices 
        $\varvertex{0}{out}{i}{j}$,
        $\varvertexdt{out}{i}{j}$,
        $\varvertex{1}{out}{i}{j}$%
        )
        appear consecutively in $P$.
    \end{observation}
    Assume towards a contradiction that $P$ is not $(n,m)$-respectable.
    Let $(a, b)$ be the 
    lexicographically maximum pair with $a\in [n]$ and $b\in[m]$ 
    such that $P$ is $(a,b)$-respectable. 
    That is, if $b<m$, then $P$ is not $(a,b+1)$-respectable, 
    and if $b=m$, then $P$ is not $(a+1,1)$-respectable.

    Note that we may assume that $P$ is $(1, 1)$-respectable 
    and therefore $(a, b) \ge (1, 1)$.
    $P$ starts in $s$, $s_1$, and then enters
    $D_1^1$. 
    By \Cref{obs:dt-vertices}, the next three vertices are 
    $\varvertex{1}{in}{1}{1}$,
    $\varvertexdt{in}{1}{1}$,
    $\varvertex{0}{in}{1}{1}$;
    either in that order or reversed.
    After that, $P$ may enter $\clausegadget_1$, 
    and then return to a vertex of $D_1^1$ by \Cref{lem:clause-gadget}
    which only leaves the choice of traversing 
    $\varvertex{0}{out}{1}{1}$,
    $\varvertexdt{out}{1}{1}$,
    $\varvertex{1}{out}{1}{1}$
    next, 
    again by \Cref{obs:dt-vertices}.
    Either the order is as shown in both cases (true-order),
    or the order is reversed in both cases (false-order).
    Therefore, $P$ is $(1, 1)$-respectable.

    Since $P$ is $(a,b)$-respectable, all the vertices of $D_a^b$ appear in $P$ and the vertex $x$ of $D_a^b$ that appears last in $P$ is either $\varvertex{1}{out}{a}{b}$ or $\varvertex{0}{out}{a}{b}$. By symmetry, we may assume that $x = \varvertex{1}{out}{a}{b}$.

    First, suppose that $b=m$. 
    Since $t_a$ is the only neighbor of $x$ that is not in $D_a^m$, we conclude that $t_a$ comes right after $\varvertex{1}{out}{a}{m}$ in $P$. 
    Now observe that the neighbors of $p_{a+1}$ are $t_1,\dots,t_a$ and $s_{a+1}$. Thanks to the $(a,b)$-respectability, we know that $t_1,\dots,t_{a-1}$ appear before $t_a$ in $P$. Consequently, the vertices $p_{a+1}$, $s_{a+1}$ must appear after $t_a$ in $P$, and the next vertex in $P$ must belong to $D_{a+1}^1$. 
    Now observe that the only neighbors of the vertices in $D_{a+1}^1$ are:
    
    \begin{enumerate}
        \item the vertices $\varvertex{1}{in}{a+1}{2}$ and $\varvertex{0}{in}{a+1}{2}$,
        \item potentially some vertices of the clause gadget $\clausegadget_1$, and 
        \item some vertices from the gadgets $\vargadget_1,\dots,\vargadget_a$.
    \end{enumerate}
    
    However, all the vertices from the gadgets $\vargadget_1,\dots,\vargadget_a$ appear in $P$ before $t_a$.
    From \Cref{lem:clause-gadget}, it follows that $P$ cannot ``escape'' $D_{a+1}^1$ by going to some vertices of $\clausegadget_1$, thus we deduce that $P$ traverse $D_{a+1}^1$ in true-order or false-order. Together with the fact that $P$ is $(a,m)$-respectable, we deduce that $P$ is also $(a+1,1)$-respectable. This is a contradiction with the maximality of $(a,b)$.
    
    From now, we suppose that $b<m$.
    Note that the only neighbors of $\varvertex{1}{out}{a}{b}$ are:
    
    \begin{enumerate}
        \item\label{enum:1out:1} from $D_a^b$,
        \item\label{enum:1out:2} from the gadgets $\vargadget_1,\dots,\vargadget_{a-1}$,
        \item\label{enum:1out:3} potentially from the clause gadget $\clausegadget_b$, and
        \item\label{enum:1out:4} the vertex $\varvertex{1}{in}{a}{b+1}$. 
    \end{enumerate}
    
    Since $P$ is $(a,b)$-respectable, the neighbors of $\varvertex{1}{out}{a}{b}$ from 
    \Cref{enum:1out:1,enum:1out:2} appear before $\varvertex{1}{out}{a}{b}$ in $P$.
    Moreover, the successor $y$ of $\varvertex{1}{out}{a}{b}$ in $P$ cannot be from the clause gadget $\clausegadget_b$ by \Cref{lem:clause-gadget} because $\varvertex{1}{in}{a}{b} \prec_P \varvertex{1}{out}{a}{b}$.
    Thus, $y = \varvertex{1}{in}{a}{b+1}$ and, by \Cref{obs:dt-vertices}, it is followed by $\varvertexdt{in}{a}{b+1}$ and $\varvertex{0}{in}{a}{b+1}$.

    \medskip
    
    We claim that $\varvertex{0}{in}{a}{b+1}$ and its successor in $P$ are adjacent through a dummy edge.  
    Indeed, if it is not the case, then the successor of $\varvertex{0}{in}{a}{b+1}$ in $P$ is either $\varvertex{0}{out}{a}{b+1}$ or a vertex $w$ from $\clausegadget_{b+1}$.
    In the latter case, by \Cref{lem:clause-gadget} $w$ is followed in $P$ by all the vertices from $\clausegadget_{b+1}$ and then $\varvertex{0}{out}{a}{b+1}$.
    In both case, the successors of $\varvertex{0}{out}{a}{b+1}$ in $P$ must be $\varvertexdt{out}{a}{b+1}$ and $\varvertex{1}{out}{a}{b+1}$ by \Cref{obs:dt-vertices}. But, then $P$ would traverse $D_a^1,\dots, D_a^b,D_a^{b+1}$ in true-order which implies that $P$ is $(a,b+1)$-respectable: a contradiction with the maximality of $(a,b)$.
    
    \smallskip
    
    Thus, $\varvertex{0}{in}{a}{b+1}$ and its successor $w$ in $P$ are adjacent through a dummy edge.
    In particular, we know that $w$ is different from $\varvertex{0}{out}{a}{b+1}$.
    See \Cref{fig:lem-respect} for an illustration.
    The neighbors of $\varvertex{0}{out}{a}{b+1}$ are: 
    
    \begin{enumerate}
        \item\label{enum:0out:1} $\varvertex{0}{in}{a}{b+1}$, 
        \item\label{enum:0out:2} some vertices from the gadgets $\vargadget_1,\dots,\vargadget_{a-1}$,
        \item\label{enum:0out:3} potentially a vertex from  $\clausegadget_{b+1}$, 
        \item\label{enum:0out:4} $\varvertexdt{out}{a}{b+1}$, and 
        \item\label{enum:0out:5} $\varvertex{0}{in}{a}{b+2}$ if $b+1 < m$, or $t_a$ if $b+1=m$.
    \end{enumerate}

    We know that $\varvertex{0}{in}{a}{b+1}$ has two neighbors in $P$ distinct from $\varvertex{0}{out}{a}{b+1}$.
    Above, we argued that 
    $\varvertex{0}{in}{a}{b+1}$ 
    and 
    $\varvertex{0}{out}{a}{b+1}$
    are not adjacent in $P$,
    ruling out \Cref{enum:0out:1}.
    Since $P$ is $(a,b)$-respectable, we know that $ V(\vargadget_1) \prec_P \dots \prec_P V(\vargadget_{a-1}) \prec_P V(\vargadget_a)$, thus the vertices from \Cref{enum:0out:2} appear before $\varvertex{0}{in}{a}{b+1}$ in $P$, so they cannot be adjacent to $\varvertex{0}{out}{a}{b+1}$ in $P$.
    From \Cref{lem:clause-gadget}, we deduce that $\varvertex{0}{out}{a}{b+1}$ is not adjacent in $P$ to its potential neighbor in $\clausegadget_{b+1}$ because $\varvertex{0}{in}{a}{b+1}$ is not adjacent in $P$ to its potential neighbor in $\clausegadget_{b+1}$,
    ruling out \Cref{enum:0out:3}.
    Hence, the neighbors of $\varvertex{0}{out}{a}{b+1}$ in $P$ are $\varvertexdt{out}{a}{b+1}$ and $\varvertex{0}{in}{a}{b+2}$ if $b+1 < m$, or $t_a$ if $b+1=m$.
    Symmetrically, the neighbors of $\varvertex{1}{out}{a}{b+1}$ in $P$ are $\varvertexdt{out}{a}{b+1}$ and $\varvertex{1}{in}{a}{b+2}$ if $b+1 < m$, or $t_a$ if $b+1=m$.
    By \Cref{obs:dt-vertices}, $\varvertex{1}{out}{a}{b+1}$, $\varvertexdt{out}{a}{b+1}$ and $\varvertex{0}{out}{a}{b+1}$ must appear consecutively in $P$.
    
    If $b+1=m$, then $\varvertex{1}{out}{a}{b+1}$ and $\varvertex{0}{out}{a}{b+1}$ are both adjacent to $t_a$ in $P$, and $P$ would contain a cycle of size four: a contradiction.
    Now, if $b+1 <m$, then $\varvertex{1}{out}{a}{b+1}$ and $\varvertex{0}{out}{a}{b+1}$ are adjacent in $P$ to $\varvertex{1}{in}{a}{b+2}$ and $\varvertex{0}{in}{a}{b+2}$, respectively. But by \Cref{obs:dt-vertices}, these two latter vertices must be adjacent in $P$ to $\varvertexdt{in}{a}{b+2}$. This implies that $P$ contains a cycle of size six (see \Cref{fig:lem-respect}): a contradiction.
    \begin{figure}[hbt]
        \centering
        \includegraphics[width=.8\textwidth]{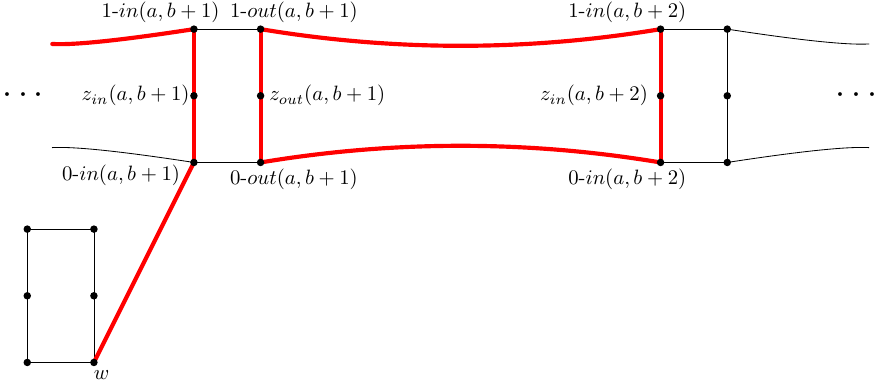}
        \caption{The situation when a Hamiltonian path uses a dummy edge.}
        \label{fig:lem-respect}
    \end{figure}
\end{proof}

\begin{lemma}\label{lem:cor-path-to-ass}
    If $G$ has a Hamiltonian path, then $\cnf$ is satisfiable.
\end{lemma}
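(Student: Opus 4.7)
The plan is to extract a satisfying assignment from any Hamiltonian path $P$ of $G$, using Lemmas \ref{lem:respect} and \ref{lem:clause-gadget} as the main tools.

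First I would observe that $s$ and $t$ have degree one in $G$, so $P$ is necessarily an $(s,t)$-path, and Lemma \ref{lem:respect} applies to give that $P$ is $(n,m)$-respectable: for each $i \in [n]$, $P$ traverses $D_i^1,\dots,D_i^m$ either in true-order or in false-order. I would then define $\alpha \colon \var(\cnf) \to \{0,1\}$ by setting $\alpha(x_i) = 1$ exactly when $P$ traverses $\vargadget_i$ in true-order. It remains to verify that this $\alpha$ satisfies every clause.

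Fix a clause $C_j$ with literals $\ell_1, \dots, \ell_k$. Applying Lemma \ref{lem:clause-gadget} to the induced copy $\clausegadget_j \subseteq_i G$, we get some $h \in [k]$ for which $P \cap \clausegadget_j$ is a $(\sigma_h, \tau_h)$-path, so the unique external $P$-neighbors of $\sigma_h$ and $\tau_h$ are forced by $G$. Let $x_i = x_{j_h}$ denote the variable of $\ell_h$. By construction, the only edges leaving $\clausegadget_j$ at $\sigma_h$ and $\tau_h$ go to $\varvertex{0}{in}{i}{j}$ and $\varvertex{0}{out}{i}{j}$ respectively if $\ell_h$ is positive, or to $\varvertex{1}{in}{i}{j}$ and $\varvertex{1}{out}{i}{j}$ if $\ell_h$ is negated. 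What remains is to show that the traversal order of $\vargadget_i$ matches the polarity of $\ell_h$, so that $\alpha$ satisfies $\ell_h$ and hence $C_j$.

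I would settle the polarity-matching step by contradiction. Suppose $\ell_h$ is positive (the negated case is symmetric) but $P$ traverses $\vargadget_i$ in false-order. Then, by the definition of false-order together with Observation \ref{obs:dt-vertices} and the fact that no clause-gadget edge touches $\varvertex{1}{in}{i}{j}$ or $\varvertex{1}{out}{i}{j}$, the six vertices of $D_i^j$ already occur in $P$ in the forced order $\varvertex{0}{in}{i}{j}, \varvertexdt{in}{i}{j}, \varvertex{1}{in}{i}{j}, \varvertex{1}{out}{i}{j}, \varvertexdt{out}{i}{j}, \varvertex{0}{out}{i}{j}$. But then including the edges $\{\sigma_h, \varvertex{0}{in}{i}{j}\}$ and $\{\tau_h, \varvertex{0}{out}{i}{j}\}$ together with the interior of $P \cap \clausegadget_j$ would yield a second $\varvertex{0}{in}{i}{j}$-$\varvertex{0}{out}{i}{j}$ path inside $P$, producing a cycle and contradicting that $P$ is a simple path. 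Hence $P$ traverses $\vargadget_i$ in true-order, so $\alpha(x_i)=1$ and $\ell_h$ is satisfied.

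The main obstacle is this final polarity-matching step: I have to verify that attempting to use the positive-literal clause-gadget edges in a false-order traversal (or vice versa) forces a cycle in $P$. Given $(n,m)$-respectability and the degree-two property of the subdivision vertices, this reduces to the short forced-subpath argument above, so I expect the full proof to be correspondingly short.
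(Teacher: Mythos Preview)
Your proposal is correct and follows essentially the same route as the paper: invoke \Cref{lem:respect} to obtain $(n,m)$-respectability, read off the assignment from the true/false order of each $\vargadget_i$, and for each clause use \Cref{lem:clause-gadget} together with the forced external edges at $\sigma_h,\tau_h$ to pin down which literal is satisfied. The only cosmetic difference is in the polarity-matching step: the paper argues directly that the clause-gadget detour forces $P$ to enter $D_i^j$ at $\varvertex{1}{in}{i}{j}$ (hence true-order), whereas you phrase the same conflict as a contradiction in which $\varvertex{0}{in}{i}{j}$ would acquire a third $P$-neighbour (equivalently, $P$ would contain a cycle). Both arguments rest on \Cref{obs:dt-vertices} and the uniqueness of the external edges at $\sigma_h,\tau_h$, and both are short.
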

\begin{proof}
    Let $P$ be a Hamiltonian path in $G$.
    By \Cref{lem:respect}, $P$ is $(n,m)$-respectable and in particular, for every $i\in[n]$, $P$ traverses $D_i^1,\dots,D_i^m$ in either true-order or false-order.
    We construct a truth assignment $\ass \colon \var(\cnf) \to \{0, 1\}$ as follows.
    For each $i \in [n]$, if $P$ traverses $D_i^1,\dots,D_i^m$ in true-order, 
    we set $\ass(x_i) = 1$, 
    and if $P$ traverses $D_i^1,\dots,D_i^m$ in false-order,
    we set $\ass(x_i) = 0$.
    Let $j \in [m]$.
    Since $P$ is a Hamiltonian path, it collects the vertices in $\clausegadget_j$.
    By \Cref{lem:clause-gadget}, there is some $i \in [n]$ such that 
    $P$ contains a path starting with a vertex in $D_i^j$,
    traversing all vertices in $\clausegadget_j$, and ending with a vertex in $D_i^j$.
    By construction, the two vertices in $D_i^j$ are
    $\varvertex{0}{in}{i}{j}$, $\varvertex{0}{out}{i}{j}$,
    if $x_i$ appears positively in $C_j$, and
    $\varvertex{1}{in}{i}{j}$, $\varvertex{1}{out}{i}{j}$,
    if $x_i$ appears negated in $C_j$.
    Suppose without loss of generality that the former is the case.
    For $P$ to contain the described subpath,
    it must have entered $D_i^j$ at vertex
    $\varvertex{1}{in}{i}{j}$,
    which implies that $P$ traverses $D_i^1,\dots,D_i^m$ in true-order.
    By our construction, $\ass(x_i) = 1$, meaning that the literal of $C_j$ containing $x_i$
    satisfies $C_j$.
\end{proof}

\begin{lemma}
    We can construct in polynomial time a linear order on $V(G)$ of mim-width at most $25$.
\end{lemma}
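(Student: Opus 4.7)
The plan is to construct an explicit linear order $\linord$ of $V(G)$ and then bound the mim-value of each of its prefixes by $25$ via a case analysis on the edge types crossing the corresponding cut.

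For the order, I would nest two loops. The outer loop iterates $j=1,\dots,m$ over clause indices, and the inner loop iterates $i=1,\dots,n$ over variable indices. In the $(i,j)$-step I place the six vertices of $D_i^j$ as a single consecutive block, and at the end of each outer iteration $j$ I append the entire clause gadget $\clausegadget_j$ as a consecutive block of at most $27$ vertices. The auxiliary vertices $s$ and $s_1$ are placed at the very start and $t_n$, $t$ at the very end; each intermediate $s_i$, $p_{i+1}$, $t_i$ is inserted immediately next to the $D$-block it is attached to, so that the edges $\{t_i,p_{i+1}\}$, $\{p_{i+1},s_{i+1}\}$ and the edges from $s_i$, $t_i$ into $D_i^1$ and $D_i^m$ stay local in $\linord$.

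To bound the mim-value of an arbitrary prefix $A$, I would partition the edges of $G[A,\bar A]$ into a constant number of groups and bound the induced matching contribution of each separately:
\begin{itemize}
    \item edges internal to a single cycle $D_i^j$ (a $6$-cycle) contribute at most $3$;
    \item edges internal to a single clause gadget $\clausegadget_j$, which sits as a consecutive block of at most $27$ vertices in $\linord$, contribute at most $\lfloor 27/2\rfloor=13$;
    \item the edges connecting the layer $\{D_i^j:i\in[n]\}$ to the next layer $\{D_i^{j+1}:i\in[n]\}$ (both the main matching edges and the dummy edges) form a chain-like bipartite graph with induced matching number at most $1$, by an argument in the spirit of \Cref{obs:chain};
    \item the dummy edges $\{t_i,p_j\}$ with $i<j$ likewise form a chain graph and contribute at most $1$;
    \item edges between a clause gadget $\clausegadget_j$ and the at most three $D$-cycles to which it is attached contribute a small constant;
    \item the few remaining edges incident to $s$, $t$, $s_i$, $p_i$, $t_i$ contribute a small constant.
\end{itemize}
Summing these contributions and carefully accounting for which groups can be simultaneously active on a single cut should yield $\mimval_G(A)\le 25$ for every prefix $A$.

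The main obstacle is the case where the cut lies inside a clause gadget: the internal contribution of that gadget is essentially tight at $13$, and it must be combined simultaneously with the contributions from the two chain-like dummy-edge groups and the handful of edges to the attached $D$-cycles. Choosing the internal order of the at most $27$ vertices of each $\clausegadget_j$ in $\linord$ so that the total stays within the budget of $25$ rather than exceeding it is the delicate step that pins down the exact value of the bound.
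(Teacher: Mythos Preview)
Your plan is essentially the paper's: the same ``clause-major, variable-minor'' linear order with each $\clausegadget_j$ appended as a block after layer~$j$, followed by a case split on edge types crossing the cut, summing to $25$ with the dominant $13$ coming from the trivial bound $\lfloor 27/2\rfloor$ on a clause block.

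One point in your analysis is wrong and would bite you when you tried to write it out. The bipartite graph between layer $\{D_i^j\}_i$ and layer $\{D_i^{j+1}\}_i$ (main edges plus dummy edges) is \emph{not} a chain graph: for the largest index $i$ present on the $A$-side, the two main edges $\{\varvertex{0}{out}{i}{j},\varvertex{0}{in}{i}{j+1}\}$ and $\{\varvertex{1}{out}{i}{j},\varvertex{1}{in}{i}{j+1}\}$ are an induced $2$-matching, since the dummy edges only go to strictly smaller indices. The paper accordingly bounds the ``inside $D$'' contribution by $4$ (at most $2$ from the single cut cycle, at most $2$ from layer-to-layer), not $3+1$ as you have it; the total is the same, but your justification for the ``$\le 1$'' via \Cref{obs:chain} would fail. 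Relatedly, your claim that the edges $t_ip_{i+1}$ ``stay local'' is false in this order---$t_i$ sits in row~$m$ and $p_{i+1}$ in row~$1$---so those edges really do need the chain-graph argument you give for the $t_ip_j$ dummies (and there the chain structure is genuine).

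Finally, what you flag as the ``main obstacle''---carefully ordering the interior of $\clausegadget_j$---is not one: the paper uses an arbitrary internal order and the trivial $13$. The care is in the $D$-to-$D$ and $S\cup P\cup T$ bookkeeping, not the clause blocks.
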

\begin{proof}
    Let $\linord$ be any linear order on $V(G)$ extending the following partial order:
    \begin{align*}
        s, s_1, &D_1^1, p_2, s_2, D_2^1, \ldots, p_n, s_n, D_n^1, V(\clausegadget_1), \ldots, \\
        &D_1^2, D_2^2, \ldots, D_n^2, V(\clausegadget_2), \\
        &\ldots, \\
        &D_1^{m-1}, D_2^{m-1}, \ldots, D_n^{m-1}, V(\clausegadget_{m-1}), \\
        &D_1^{m}, t_1, D_2^{m}, t_2, \ldots, D_n^{m}, t_n, t, V(\clausegadget_{m}) 
    \end{align*}

    Throughout the following, we let $D = \bigcup_{(i,j) \in [n] \times [m]} D_i^j$,
    $P = \{p_i \mid i \in [n]\}$, $S = \{s\} \cup \{s_i\mid i \in [n]\}$,
    $T = \{t_i \mid i \in [n]\} \cup \{t\}$.
    Let $(A, \bar{A})$ be any cut induced by $\linord$ (such that $A$ contains the first $t$ vertices of $\linord$ for some $t\leq |V(G)|$),
    and let $M$ be an induced matching in $G[A, \bar{A}]$.
    Each edge of $M$ is one of the following types, 
    and below we justify the bounds on their number indicated in the respective parenthesis.
    
    \begin{enumerate}
        \item\label{enum:mim-SPT} 
            Between $S \cup P$ and $S \cup T$ ($\le 1$).
        \item\label{enum:mim-STD} 
            Between $S \cup T$ and $D$ ($\le 1$).
        \item\label{enum:mim-DC} 
            Between $D$ and $V(\clausegadget_j)$, for some $j \in [m]$ ($\le 6$).
        \item\label{enum:mim-C} 
            Inside $V(\clausegadget_j)$ for some $j \in [m]$ ($\le 13$).
        \item\label{enum:mim-D} 
            Inside $D$ ($\le 4$).
    \end{enumerate}
    
    For edges of type \ref{enum:mim-SPT},
    note that $G[A \cap (S \cup P), \bar{A} \cup (S \cup T)]$ is a chain graph plus 
    possibly one edge $p_is_i$, for some $i \in [n]$.
    Recall by \Cref{obs:chain}, $M$ can contain at most one edge from any chain graph.
    Furthermore, if $p_is_i \in M$, then $M$ has no other edge from this subgraph, 
    since in this case, $p_i \in A$
    is adjacent to all vertices in $T \cap \bar{A}$ 
    that have a neighbor in $(S \cup P) \cap A$.
    So in either case, there is at most one edge of type \ref{enum:mim-SPT} in $M$.
    
    For edges of type \ref{enum:mim-STD}, such edges can only occur if either some $s_i$
    is the last vertex of $A$ or if some $t_i$ is the first vertex of $\bar{A}$;
    in any case, there is at most one such edge in $M$.
    The number of edges of types \ref{enum:mim-DC} and \ref{enum:mim-C} are trivial upper bounds.
    
    Lastly, consider edges of type \ref{enum:mim-D}.
    Let $(j, i)$ be the lexicographically largest pair such that $A$ contains vertices from $D_i^j$.
    First, there are at most two edges from $D_i^j$ itself in $M$.
    Next, consider the edges in $M$ that are between 
    $D^j = \bigcup_{i \in [n]} D_i^j$ and
    $D^{j+1} = \bigcup_{i \in [n]} D_i^{j+1}$.
    Suppose there is some $h \in [n]$ such that there is an edge 
    $d_j d_{j+1}$ in $M$, where $d_j \in D_h^j$ and $d_{j+1} \in D_h^{j+1}$.
    Note that there are at most two such edges per $h$.
    Moreover, in this case, $M$ has no other edges between $D^j$ and $D^{j+1}$,
    since for each such edge, at least one endpoint has a neighbor in $\{d_j, d_{j+1}\}$.
    So in this case, there are at most four edges of type \ref{enum:mim-D} in total.
    If there is no such $h$, then by a similar argument, 
    $M$ contains at most one edge between $D^j$ and $D^{j+1}$,
    in which case we have at most three edges of type \ref{enum:mim-D}.
\end{proof}

This finishes the proof for \textsc{Hamiltonian Path}.
By adding another degree-two vertex adjacent to only $s$ and $t$, 
we also obtain the same result for the \textsc{Hamiltonian Cycle} problem and the fact that adding a vertex to a graph increase its mim-width by at most~$1$.
\thmMain*

\myparagraph{Acknowledgements.}
We thank Richard Wols for the counterexample presented in \Cref{sec:counterex}.

\bibliographystyle{plain}
\small
\bibliography{ref}

\end{document}